\documentclass[a4paper]{article}

\usepackage{amsmath,amssymb,amsthm,amsfonts}
\numberwithin{equation}{section}
\usepackage{physics}
\usepackage[english]{babel}
\usepackage[margin=3cm]{geometry}
\usepackage{csquotes}
\usepackage{mathtools}
\usepackage{relsize}
\usepackage{indentfirst}
\usepackage{thm-restate}
\usepackage{mathrsfs}
\usepackage[dvipsnames]{xcolor}
\usepackage{paracol}
\usepackage{enumitem}
\usepackage{soul}
\usepackage{aligned-overset}
\usepackage[thinc]{esdiff}
\usepackage{upgreek}
\usepackage{datetime}
\usepackage{tensor}
\usepackage{caption}
\usepackage{subcaption}
\usepackage{appendix}
\allowdisplaybreaks

\usepackage[dvipsnames]{xcolor}
\usepackage{hyperref}	
\newcommand\myshade{85}
\colorlet{mylinkcolor}{violet}
\colorlet{mycitecolor}{YellowOrange}
\colorlet{myurlcolor}{Aquamarine}
\hypersetup{
    linkcolor  = blue!\myshade!black,
    citecolor  = mycitecolor!\myshade!black,
    urlcolor   = myurlcolor!\myshade!black,
    colorlinks = true,
    hypertexnames=false,
}
\usepackage[capitalize,nameinlink]{cleveref}\usepackage[
style = alphabetic,
maxbibnames=9,maxcitenames=9,
doi=true,isbn=true,giveninits=true,
backend=biber,
url=false,eprint=true,arxiv=abs]{biblatex}
\theoremstyle{plain}

\theoremstyle{plain}

\Crefname{assumption}{assumption}{assumptions}

\newtheorem{lemma}{Lemma}[section]
\newtheorem{corollary}{Corollary}[section]
\newtheorem{theorem}{Theorem}[section]

\usepackage{pageslts}
\usepackage{fancyhdr}
\DeclarePairedDelimiterX{\cgip}[2]{\langle}{\rangle}{#1,#2}
\newcommand{\CC}{\mathbb{C}}
\newcommand{\RR}{\mathbb{R}}

\newcommand{\gc}[1]{\abs{\g{#1}}}
\newcommand{\g}[1]{G_{#1}}

\title{A simple proof of the group-theoretic Zhang--Yeung inequality}
\usepackage{authblk}

\author{Harold Nieuwboer\thanks{hani@math.ku.dk} }
\author{Lubashan Pathirana\thanks{lpk@math.ku.dk}}
\affil{Department of Mathematical Sciences and QMATH, University of Copenhagen, Denmark}
\date{}
\begin{document}
\pagenumbering{arabic}
\lhead{\thepage}
\maketitle
\vspace{-1cm}

\begin{abstract}
  Zhang and Yeung (IEEE Trans. Inf. Theory, 1998) established the first non-Shannon-type inequality that holds for all entropic vectors.
  Chan and Yeung (IEEE Trans. Inf. Theory, 2002) showed that there is a one-to-one relation between linear entropy inequalities and multiplicative inequalities involving cardinalities of subgroups of a finite group.
  The Shannon inequality admits a simple proof in the group theoretic setting, but a direct proof of the translation of Zhang--Yeung's inequality to the group-theoretic setting remained elusive.
  We resolve this open problem, giving a direct proof of the group-theoretic Zhang--Yeung inequality for cardinalities of subgroups of finite groups, using elementary group-theoretic and counting arguments.
\end{abstract}

\section{Introduction}

One of the pivotal results of classical information theory is Shannon's inequality~\cite{shannonMathematicalTheoryCommunication1948}:
for (discrete) random variables~$A,B,C$, their entropy satisfies the linear inequality
\[
  H(A B) + H(B C) \geq H(A B C) + H(B),
\]
also known as the submodularity inequality.
Equivalently, the conditional mutual information~$I(A:C | B) = H(A B) + H(B C) - H(A B C) - H(B)$ is non-negative.
A natural question, posed by Pippenger~\cite{pippenger1986laws}, is: which other relations, if any, hold between entropies of joint random variables?

This question may be formalized as follows.
Let~$n \geq 1$, and suppose we are given jointly distributed finite-valued discrete random variables~$X_1, \dotsc, X_n$.
For a subset~$I = \{i_1, \dotsc, i_k\} \subseteq [n]$ we denote by~$H(X_{i_1}, \dotsc, X_{i_k})$ the Shannon entropy of the joint random variable~$(X_{i_1}, \dotsc, X_{i_k})$, sometimes abbreviated~$H(X_I)$. We additionally set~$H(X_\emptyset) = 0$.
The entropic vector associated with the~$X_i$ is then defined as the vector in~$\RR^{2^n}$, indexed by subsets of $[n]$, with~$I$-th entry equal to~$H(X_I)$.
The set of all entropic vectors is denoted by~$\Gamma_n^*$.
Pippenger's question then reduces to characterizing all valid inequalities for the set~$\Gamma_n^*$.

Zhang and Yeung~\cite{zhangNonShannontypeConditionalInequality1997} showed that~$\Gamma_n^*$ is itself not closed when~$n \geq 3$, in the sense that there are non-entropic vectors which can be arbitrarily well approximated by entropic vectors.
However, its topological closure~$\overline{\Gamma_n^*}$ is a convex cone, and hence characterized by~\emph{linear} inequalities.
Its relative interior is contained in~$\Gamma_n^*$~\cite[Thm.~1]{matusTwoConstructionsLimits2007}, so the closure is only relevant for the boundary points.

For~$n = 3$, Zhang and Yeung showed that $\overline{\Gamma_3^*}$ is indeed determined by the Shannon-type inequalities: monotonicity, and submodularity.
For~$n \geq 4$, the situation drastically changes~\cite{zhangCharacterizationEntropyFunction1998}:
there are valid linear inequalities for~$\overline{\Gamma_n^*}$ that cannot be derived from Shannon-type inequalities:
\begin{theorem}
  \label{thm:zy inequality}
  For finite-outcome discrete random variables~$X_1, X_2, X_3, X_4$, one has the inequality
  \[
2 I(X_3:X_4)  \leq I(X_1 : X_2) + I(X_1 : X_3 X_4) + 3 I(X_3:X_4|X_1) + I(X_3:X_4|X_2).
     \tag{ZY}
     \label{eq:ZY}
  \]
  Moreover, this inequality cannot be derived from Shannon-type inequalities.
\end{theorem}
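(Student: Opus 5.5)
The plan is the classical copy-lemma argument of Zhang and Yeung. Given $X_1,\dots,X_4$ with joint law $p$, I will build an auxiliary variable $Y$ (a copy of $X_1$) with the joint distribution
\[
  q(x_1,x_2,x_3,x_4,y) = \frac{p(x_1,x_2,x_3,x_4)\, p_{X_1 X_3 X_4}(y,x_3,x_4)}{p_{X_3X_4}(x_3,x_4)} .
\]
Under $q$, the pair $(X_3,X_4)$ carries the original marginal. The triple $(Y,X_3,X_4)$ has the same law as $(X_1,X_3,X_4)$, and $Y$ is conditionally independent of $(X_1,X_2)$ given $(X_3,X_4)$. Hence $I(Y:X_1X_2\mid X_3X_4)=0$, and every information quantity involving only $Y,X_3,X_4$ equals the corresponding quantity with $X_1$ in place of $Y$.

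Next I apply a purely Shannon-type lemma to the four variables $X_3,X_4$ and the pair of "observers" $X_1,Y$. The lemma states that for any random variables $A,B,C,D$,
\[
  I(A:B) \le I(A:B\mid C) + I(A:B\mid D) + I(C:D),
\]
and it follows from expanding $I(A:B)$ via the chain rule. I will use it twice, together with the triangle-type bound
\[
  I(X_3:X_4) \le I(X_3:X_4\mid X_1) + I(X_3:X_4\mid Y) + I(X_1:Y).
\]
The term $I(X_1:Y)$ is then bounded by a quantity whose only non-original part is $I(X_1:Y)\le I(X_1: X_3X_4)$-type terms plus $I(X_1 X_2: Y\mid X_3X_4)=0$. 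Concretely, I expect the standard calculation
\[
  I(X_1:Y) \le I(X_1:X_2) + I(X_1: X_3X_4)\ \text{(after symmetrisation)} + \dots
\]
to go through by using the conditional independence to replace $Y$-terms with $X_1$-terms. The final step is to add the two copies of the basic inequality, one with $(X_1,Y)$ and one with $(X_1,X_2)$. After substituting $I(X_3:X_4\mid Y)=I(X_3:X_4\mid X_1)$, this should produce exactly $2I(X_3:X_4)$ on the left and $I(X_1:X_2)+I(X_1:X_3X_4)+3I(X_3:X_4\mid X_1)+I(X_3:X_4\mid X_2)$ on the right. I expect the main obstacle to be the bookkeeping here: choosing the right Shannon inequality on the five variables $(X_1,X_2,X_3,X_4,Y)$ whose combination with $I(Y:X_1X_2\mid X_3X_4)=0$ yields exactly the coefficients of \eqref{eq:ZY}. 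My first attempt will be to write everything in terms of the entropies $H(\cdot)$ of the five variables and solve for a nonnegative combination of elemental Shannon inequalities.

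For the "moreover" part, I will exhibit a polymatroid on four elements, i.e.\ a vector in $\RR^{2^4}$ satisfying all Shannon inequalities, that violates \eqref{eq:ZY}. The candidate is the standard example: $h(\emptyset)=0$, $h(i)=2$ for singletons, $h(12)=4$, $h(ij)=3$ for the other pairs, and $h=4$ for all triples and the full set. Monotonicity and submodularity reduce to a finite check. Evaluating both sides then gives $2I(X_3:X_4)=2$, while every term on the right is $0$, so \eqref{eq:ZY} fails. Since this vector satisfies all Shannon-type inequalities, \eqref{eq:ZY} cannot be derived from them.
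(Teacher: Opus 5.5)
There is a genuine gap at the central step. The ``purely Shannon-type lemma'' $I(A:B)\le I(A:B\mid C)+I(A:B\mid D)+I(C:D)$ is the Ingleton inequality. It does not follow from the chain rule or from any Shannon inequality, and it fails for some entropic vectors. Your own polymatroid refutes it: with $(A,B,C,D)=(X_3,X_4,X_1,X_2)$ it reads $1\le 0$.

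The plan is in fact self-contradictory. Your second application, with the pair $(X_1,X_2)$, never involves $Y$. Moreover, (ZY) is exactly the sum of Ingleton for $(X_1,X_2;X_3,X_4)$ and the genuine Shannon inequality $I(X_3:X_4)\le 2I(X_3:X_4\mid X_1)+I(X_1:X_3X_4)$. So if your lemma were Shannon-type, (ZY) would be Shannon-derivable, contradicting your second part.

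There is also a small slip in that second part: not every right-hand term vanishes, since $I(X_1:X_3X_4)=2+3-4=1$. The right side is $1<2$, so the conclusion survives.

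Your auxiliary variable $Y$ is the right one. What is missing is a valid five-variable Shannon inequality that uses $I(Y:X_1X_2\mid X_3X_4)=0$ essentially. In it, $Y$ should act as an approximate common part of $X_3,X_4$, not as an Ingleton partner of $X_1$. For arbitrary $A,B,C,D,Z$ one has
\[
I(C:D)\le I(C:D\mid A)+I(C:D\mid B)+I(A:B)+I(C:D\mid Z)+I(C:Z\mid D)+I(D:Z\mid C)+3\,I(Z:AB\mid CD).
\]
One way to obtain it:
\begin{enumerate}
\item Bound $I(Z:ABCD)\le I(A:B)+I(Z:ABCD\mid A)+I(Z:ABCD\mid B)$.
\item Split each conditional term once more over $C$ and $D$, e.g.\ $I(Z:ABCD\mid A)\le I(C:D\mid A)+I(Z:ABCD\mid AC)+I(Z:ABCD\mid AD)$.
\item Bound $I(Z:ABCD\mid AC)\le I(Z:D\mid C)+I(Z:AB\mid CD)$, and similarly for the other three terms.
\item Use $I(C:D)=I(C:D\mid Z)+I(Z:CD)-I(Z:C\mid D)-I(Z:D\mid C)$ together with $I(Z:CD)=I(Z:ABCD)-I(Z:AB\mid CD)$.
\end{enumerate}
Now take $(A,B,C,D,Z)=(X_1,X_2,X_3,X_4,Y)$. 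The last term vanishes, and the $Z$-terms become $I(X_3:X_4\mid X_1)+I(X_1:X_3\mid X_4)+I(X_1:X_4\mid X_3)$. The identity $I(X_1:X_3\mid X_4)+I(X_1:X_4\mid X_3)=I(X_1:X_3X_4)+I(X_3:X_4\mid X_1)-I(X_3:X_4)$ then turns the result into exactly (ZY).

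For comparison, the paper does not reprove this entropic statement; it cites Zhang--Yeung. Its own contribution is a copy-free, group-theoretic proof of (G-ZY), which yields (ZY) through the Chan--Yeung correspondence.
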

Characterizing the full set of inequalities has remained an open problem since then.
By this point, many more inequivalent inequalities have been found~\cite{makarychevNewClassNonShannontype2002,zhangNewNonShannonType2003,doughertySixNewNonShannon2006,matusInfinitelyManyInformation2007,xuProjectionMethodDerivation2008,doughertyNonShannonInformationInequalities2011,csirmazBookInequalities2014,csirmazInformationInequalitiesFive2026}, including infinite families that suffice to show that~$\overline{\Gamma_n^*}$ is not polyhedral~\cite{matusInfinitelyManyInformation2007} (later observed to come from a single non-linear inequality~\cite{chanNonlinearInformationInequalities2008}).
A full description of the entropy cone remains elusive for now. 
For a better overview of connections and applications, we refer the reader to~\cite{yeungFacetsEntropy2015}.

Chan~and~Yeung~\cite{chanRelationInformationInequalities2002} observed that inequalities on Shannon entropies directly translate to inequalities on cardinalities of subgroups of a finite group~$G$.
The construction is as follows.
Consider subgroups~$G_1,\dotsc,G_n \leq G$, and for~$I \subseteq [n]$ write~$G_I = \cap_{i \in I} G_i$ (setting~$G_\emptyset = G$).
One can now construct random variables~$X_1, \dotsc, X_n$, by simply uniformly sampling~$g \in G$, and setting~$X_i = g G_i \in G/G_i$.
The entropy of~$X_i$ is~$H(X_i) = \log (\abs{G} / \abs{G_i})$, as the coset~$g G_i$ is sampled uniformly among all cosets.
More generally, a tuple~$(g G_{i_1}, \dotsc, g G_{i_k})$ is in one-to-one correspondence with the left-coset~$g G_I$ of the intersection~$G_I$, and therefore~$X_I$ corresponds to uniformly sampling from the~$G_I$-cosets; hence $H(X_I) = \log(\abs{G} / \abs{G_I})$.
Any (linear) inequality on entropies of arbitrary jointly distributed random variables therefore translates to a (multiplicative) inequality on cardinalities of subgroups and their intersections.
The main result of~\cite{chanRelationInformationInequalities2002} is that the converse also holds: multiplicative inequalities on cardinalities of subgroups imply linear inequalities on the entropy cone.
The underlying geometric reason is that the closed convex cone generated by the \emph{group-characterizable} entropy vectors is exactly~$\overline{\Gamma_n^*}$.

As an example, the Shannon inequality may be cast in group theoretic language as follows.
Let~$G_1, G_2, G_3 \leq G$.
Then
\[
  \log \frac{\gc{}}{\gc{12}} + \log \frac{\gc{}}{\gc{23}} \geq \log \frac{\gc{}}{\gc{123}} + \log \frac{\gc{}}{\gc{2}},
\]
or equivalently
\[
  \gc{12} \gc{23} \leq \gc{123} \gc{2}.
\]
This admits the following elementary group-theoretic proof.
The set~$\g{12} \g{23} = \{ g g' : g \in \g{12},\, g'\in\g{23} \}$ is clearly contained in~$\g{2}$, so~$\abs{\g{12} \g{23}} \leq \gc{2}$.
For every~$z \in \g{12} \g{23}$, there are exactly~$\gc{123}$ pairs~$(g, g') \in \g{12} \times \g{23}$ such that~$g g' = z$.
Indeed, suppose~$(h, h') \in \g{12} \times \g{23}$ is another such pair: then~$d = h^{-1} g = h' (g')^{-1}$ is in~$\g{12}$ and~$\g{23}$, hence in~$\g{123}$.
Moreover, $(h, h')$ can be recovered from~$d$ and~$(g,g')$ via~$h = g d^{-1}$, $h' = d g'$.
Therefore~$\g{123}$ is in bijection with the set of pairs in~$\g{12} \times \g{23}$ whose product is~$z$.
We conclude that
\[
  \gc{123} \abs{\g{12} \g{23}} = \gc{12} \gc{23}.
\]

Similarly, the Zhang--Yeung inequality (\ref{eq:ZY}) translates to the following inequality on subgroup cardinalities:
\begin{theorem}
  \label{thm:zcy inequality}
  For every finite group~$G$ and subgroups~$G_1, G_2, G_3, G_4 \leq G$, one has the inequality
  \[
    \gc{34}^3 \gc{13}^3 \gc{14}^3 \gc{23} \gc{24} 
    \leq
    \gc{1} \gc{3}^2 \gc{4}^2 \gc{12} \gc{134}^4 \gc{234}
    \tag{G-ZY}
    \label{eq:G-ZY}
  \]
\end{theorem}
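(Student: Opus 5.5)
The plan is to mimic the standard information-theoretic proof of \eqref{eq:ZY}, which uses the \emph{copy lemma} followed by a chain of Shannon inequalities. Each step is replaced by an explicit counting argument of the same type as the proof of $\gc{12}\gc{23}\leq\gc{123}\gc{2}$ given above. Recall the classical derivation. One introduces a copy $(Y_1,Y_2)$ of $(X_1,X_2)$ over $(X_3,X_4)$, meaning that $(Y_1,Y_2,X_3,X_4)$ has the same law as $(X_1,X_2,X_3,X_4)$ and $(Y_1,Y_2)$ is conditionally independent of $(X_1,X_2)$ given $(X_3,X_4)$. One then sums a handful of Shannon inequalities involving $X_1,Y_1,X_2,X_3,X_4$, which yields \eqref{eq:ZY} after using the equal-distribution and conditional-independence identities to eliminate $Y$. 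Concretely, I would first fix that derivation, i.e.\ the explicit list of submodularity and monotonicity instances and their coefficients. Then I would translate each of them into a multiplicative statement about suitable finite sets.

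The group-theoretic copy is the set
\[
\Omega=\{(g,h)\in G\times G : g^{-1}h\in \g{34}\},\qquad \abs{\Omega}=\abs{G}\,\gc{34}.
\]
The uniform measure on $\Omega$ gives $X_i=g\g{i}$ and $Y_i=h\g{i}$. Then $(X_3,X_4)=(Y_3,Y_4)$. Moreover, conditioned on the coset $g\g{34}$, the elements $g$ and $h$ are independent and uniform in it, so this is exactly the copy.

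The key observation is that $\Omega$ is not the coset space of a subgroup of a group. However, every joint variable built from the $X_i,Y_j$ is still uniform on its range, with fibres of equal size. For instance, the joint value of $(X_1,Y_1,X_3,X_4)$ is determined by the pair of cosets $(g\g{134},h\g{134})$ inside a common $\g{34}$-coset. The number of such values is $\frac{\abs{G}}{\gc{34}}\cdot\bigl(\frac{\gc{34}}{\gc{134}}\bigr)^2$. In this way every entropy appearing in the derivation becomes $\log$ of an explicit ratio of subgroup orders. The exception is terms such as $H(X_1Y_1)$ or $H(X_1 Y_1 X_2)$, where the constraint on $(g,h)$ couples $\g{1}$, $\g{2}$ and $\g{34}$; these are handled below.

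For each Shannon inequality used, I would then give a direct injection or fibre-counting argument on subsets of $\Omega$. This generalizes the product-set argument for $\gc{12}\gc{23}\leq\gc{123}\gc{2}$. The pattern is always the same: the number of joint values of $(A,B,C)$ times the number of values of $B$ is at most the product of the numbers of values of $(A,B)$ and $(B,C)$, where $A,B,C$ are coordinates of the uniform point of $\Omega$. This holds because the fibres of the map $(A,B,C)\mapsto B$ all have the same size. Equivalently, it is just submodularity of entropy for uniform-on-range variables with constant fibres, so it can be proved by elementary counting. Multiplying all these cardinality inequalities together, the auxiliary counts coming from $Y$ should cancel in pairs, leaving \eqref{eq:G-ZY}. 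I would verify this cancellation at the end by comparing exponents of each $\gc{I}$.

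The main obstacle is the terms involving both copies, such as the number of values of $(g\g{1},h\g{1})$ as $(g,h)$ ranges over $\Omega$. This count is $\abs{G}\,\abs{\g{1}\g{34}\g{1}}/\gc{1}^2$, a double-coset-type quantity rather than a ratio of subgroup orders. My first attempt is to avoid computing such quantities altogether: I would arrange the Shannon inequalities so that each such term appears once on each side with the same coefficient, and hence cancels in the product. The inequalities would then only ever require comparing these counts, not evaluating them. If that fails, I would instead bound the double-coset size $\abs{\g{1}\g{34}\g{1}}$ from above and below by products such as $\gc{1}\gc{34}/\gc{134}$, using the same pair-counting bijection as in the introduction.
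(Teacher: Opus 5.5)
There is a genuine gap, at the step where you turn each Shannon inequality into a count. Write $N_U$ for the number of values of a tuple $U$ of variables on $\Omega$. Your fibre argument does give $N_{ABC}N_B\leq N_{AB}N_{BC}$ when $G$ acts transitively on the values of $B$ (for instance $B=X_I$ or $B=Y_1$). The reason is that $G$ acts on $\Omega$ by $x\cdot(g,h)=(xg,xh)$, so all fibres over values of $B$ are translates of one another. The argument breaks down when $B$ mixes the two copies.

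A copy derivation uses exactly such steps. Summing
$I(X_3:X_4|X_1Y_1)$, $I(X_3:X_4|X_2Y_1)$, $I(X_1:X_2|Y_1)$, $I(X_1:Y_1|X_3)$, $I(X_1:Y_1|X_4)$, $I(Y_1:X_2|X_3)$, $I(Y_1:X_2|X_4)$, $I(Y_1:X_3X_4|X_1X_2)\geq 0$
and applying the copy identities gives exactly \eqref{eq:ZY}, and the mixed terms do cancel, as you hoped. However, the first two instances are conditioned on mixed variables, and for these the counting form can be false.

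Take $G=S_4$, $\g{1}=\langle(12),(34)\rangle$, $\g{3}=\mathrm{Sym}\{1,2,3\}$, $\g{4}=\mathrm{Sym}\{1,3,4\}$, so that $\g{34}=\langle(13)\rangle$ and $\g{134}=\{e\}$.
\begin{itemize}
\item The $6$ diagonal values of $(X_1,Y_1)$ each have $4$ preimages in $\Omega$, while the $24$ off-diagonal values have $1$. So the fibres are not equal.
\item One finds $N_{X_1Y_1}=30$, $N_{X_1Y_1X_3}=N_{X_1Y_1X_4}=36$ and $N_{X_1Y_1X_3X_4}=48$. The counting form of $I(X_3:X_4|X_1Y_1)\geq 0$ would then assert $36^2\geq 48\cdot 30$, which is false.
\item Your fallback also fails here: $\abs{\g{1}\g{34}\g{1}}=20$ while $\gc{1}\gc{34}/\gc{134}=8$. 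Only the lower bound $\abs{\g{1}\g{34}\g{1}}\geq\abs{\g{1}\g{34}}$ holds.
\end{itemize}
If you kept genuine entropies for the non-uniform mixed variables, the scheme would become correct. But it would then just be the information-theoretic proof pulled back along the Chan--Yeung construction, i.e.\ \eqref{eq:G-ZY} deduced from \cref{thm:zy inequality}, not the counting proof you set out to give.

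The paper does not use a copy at all. It counts $E=\{(a,b,c,d)\in\g{13}\times\g{14}\times\g{23}\times\g{24} : bc\in\g{3}\g{4}\}$ in two ways. This reduces \eqref{eq:G-ZY} to a lower bound on $\abs{\g{14}\g{23}\cap\g{3}\g{4}}$. That bound comes from positivity of traces of products of the averaging projections $\Phi_H$ on $\CC[G]$, ending with the Cauchy--Schwarz estimate $\abs{\g{14}\g{13}\cap\g{13}\g{14}}\geq\abs{\g{13}\g{14}}^2/\gc{1}$.

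Note that $\abs{HK\cap KH}/\abs{HK}^2=\Tr[(\Phi_H\Phi_K\Phi_H)^2]/\abs{G}$ is the collision probability of the law of $hkh'$, with $h,h'\in H$ and $k\in K$ uniform. So that step is a second-moment bound for the same kind of non-uniform double-coset distribution that defeats support counting in your approach. Some tool of this type, controlling non-uniform distributions rather than support sizes, is what your plan is missing.
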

Giving a group-theoretic interpretation of the Zhang--Yeung inequality was stated as an open problem by Chan--Yeung~\cite[Ex.~5.2]{chanRelationInformationInequalities2002}, and giving a direct proof was recently pointed out again as remaining open by Yeung~\cite{yeungInequalitiesRevisited2025}.
We resolve this open problem by giving a short elementary proof.

The proof we give does not appear to be a simple translation of the usual proof, which relies on the \emph{copy lemma}. This lemma featured in the original proof of the Zhang--Yeung inequality \cite{zhangCharacterizationEntropyFunction1998} and was later extracted as a separate tool by Dougherty, Freiling and Zeger~\cite{doughertySixNewNonShannon2006}.
Consider four jointly distributed random variables~$X_1,X_2,X_3,R$.
Then~$R$ is called an~$X_3$-copy of~$X_2$ over~$X_1$ if $(X_1,R)$ and~$(X_1,X_2)$ have the same marginal probability distribution, and~$I(X_2 X_3 : R | X_1) = 0$.
The last condition translates to independence of~$R$ from~$(X_2, X_3)$, conditioned on~$X_1$.
The copy lemma constructs such an~$R$ explicitly for arbitrary~$(X_1, X_2, X_3)$.
The Zhang--Yeung inequality, and all other known  unconditional linear discrete Shannon entropy inequalities (that the authors are aware of), admit proofs by first applying the copy lemma some number of times and subsequently applying Shannon-type inequalities~\cite{kacedEquivalenceTwoProof2013a,csirmazExploringEntropicRegion2025}.

A ``group-theoretic'' copy construction may be carried out as follows.
Let~$\g1, \g2, \g3 \leq G$ be subgroups of a finite group.
Define random variables~$X_1, X_2, X_3$ by sampling~$g \in G$ from the uniform distribution and then setting~$X_i = g G_i$.
To obtain an~$R$ which is an~$X_3$-copy of~$X_2$ over~$X_1$, we change the underlying sample space to~$\Omega_1 = \{ (g, h) \in G \times G : g G_1 = h G_1 \}$.
We redefine our random variables by sampling~$(g,h)$ uniformly from~$\Omega_1$, setting~$X_i = g G_i$ and~$R = h G_2$.
One may verify that indeed~$(X_1,X_2)$ and~$(X_1,R)$ have the same distribution:
indeed~$(g G_1, g G_2)$ and~$(g G_1, h G_2) = (h G_1, h G_2)$ have equal distributions since the marginal distributions on~$g$ and~$h$ themselves are uniform.
Moreover, the construction embodies the idea of conditional independence: first sample~$g \in G$ uniformly, witness~$X_1 = g G_1$, uniformly sample~$h$ from~$g G_1$, and set~$R = h G_2$.
This interpretation easily leads to the conclusion that~$I(X_2 X_3 : R | X_1) = 0$.

This copy construction has some undesirable properties.
First of all, we immediately leave the setting considered by Chan and Yeung, as the sample space~$\Omega_1$ is no longer a group in general: it is a subgroup of~$G \times G$ if and only if~$G_1$ is a normal subgroup of~$G$.
One cannot reduce to this setting without loss of generality:
If the subgroups involved are normal, then the associated entropy vector lies in a smaller cone~\cite[Thm.~6]{lindenQuantumEntropyCone2013}. In particular, even if only~$G_1$ is normal, the Ingleton inequality is valid~\cite[App.~B]{maoIngletonViolatingFiniteGroups2017a} (as the condition~$G_1 G_2 = G_2 G_1$ is automatically satisfied), but this inequality does not hold for arbitrary entropic vectors.
Moreover, iterative copying of tuples of variables, such that among the variables are those which are themselves already copies of other variables, naturally leads one to consider non-uniform probability distributions on the sample space.

The proof presented in this paper does not appear to rely on such a copy construction.
The tools used are (1) exact counting of the fiber sizes of the multiplication between two subgroups, and (2) elementary linear algebraic results on the group algebra~$\CC[G]$.
We hope that the argument leads to new tools and inspiration for proving other, potentially new, entropy inequalities.

\section{Preliminaries}
\label{sec:prelims}
The first tool we need is the following lemma.
\begin{lemma}
  \label{lemma:basic_counting_strong}
  Let $H,K\leq G$, and define
  \[
    \mu\colon H\times K\longrightarrow G,
    \qquad
    \mu(h,k):=hk.
  \]
  For every $z \in HK$,
  \[
    |\mu^{-1}(\{z\})|=|H\cap K|.
  \]
  Consequently, for every \(X\subseteq G\) (\(X\) is \emph{any} subset, not necessarily a subgroup),
  \[
    |\mu^{-1}(X)|
    =
    |H\cap K|\,|HK\cap X|.
  \]
\end{lemma}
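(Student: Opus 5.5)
The plan is to show that each fiber of $\mu$ is in bijection with $H\cap K$, and then to sum the fiber sizes over $X$. This generalizes the argument given in the introduction for $\g{12}\g{23}$.

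Fix $z \in HK$ and pick some pair $(h_0,k_0)\in H\times K$ with $h_0k_0 = z$; such a pair exists by the definition of $HK$. Define
\[
  \phi\colon H\cap K \longrightarrow \mu^{-1}(\{z\}), \qquad \phi(d) := (h_0 d^{-1},\, d k_0).
\]
This map is well defined. Since $d\in H$, we have $h_0d^{-1}\in H$, and since $d\in K$, we have $dk_0\in K$. The product is $h_0 d^{-1} d k_0 = z$.

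The map $\phi$ is injective because $d$ is recovered from the first coordinate as $d = (h_0d^{-1})^{-1}h_0$.

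For surjectivity, take $(h,k)\in\mu^{-1}(\{z\})$ and set $d := h^{-1}h_0$, which lies in $H$. From $hk = h_0k_0$ we get $d = h^{-1}h_0 = k k_0^{-1}$, which lies in $K$. Hence $d\in H\cap K$. Moreover $h_0d^{-1} = h$ and $dk_0 = k$, so $\phi(d) = (h,k)$. Therefore $|\mu^{-1}(\{z\})| = |H\cap K|$.

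For the consequence, write $\mu^{-1}(X)$ as the disjoint union of the fibers $\mu^{-1}(\{z\})$ over $z\in X$. The fiber over $z$ is empty when $z\notin HK$, since $HK$ is exactly the image of $\mu$. Summing therefore gives
\[
  |\mu^{-1}(X)| = \sum_{z\in X\cap HK} |H\cap K| = |H\cap K|\,|HK\cap X|.
\]

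There is no real obstacle here. The only point needing care is the surjectivity step, where one must check that $d$ lies in both $H$ and $K$; this follows by reading $d$ once from the $H$-side and once from the $K$-side of the equation $hk = h_0k_0$.
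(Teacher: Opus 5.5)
Your proof is correct and is essentially the paper's argument. The paper uses the bijection $t\mapsto (h_0t,\,t^{-1}k_0)$ from $H\cap K$ onto the fiber, which is your map $\phi$ with $t=d^{-1}$, and then obtains the second claim by summing the fiber sizes over $HK\cap X$, just as you do.
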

\begin{proof}
  Fix \(z=h_0k_0\in HK\). 
  The map
  \[
    H\cap K\longrightarrow\mu^{-1}(\{z\}),
    \qquad
    t\longmapsto(h_0t,t^{-1}k_0),
  \]
  is a bijection:
  (Suppose \((h_0t,t^{-1}k_0) = (h_0t',(t')^{-1}k_0)\) for some $t,t' \in H \cap K$. Then \(h_0t = h_0t'\implies t = t'\).
  Now let $(h,k)\in \mu^{-1}(\{z\})$. Then $hk = z  =h_0k_0$. Thus take $t = (h_0)^{-1}h = k_0k^{-1} \in H \cap K$ and then 
  the map above maps $t$ to $(h, k)$).
  This gives the first assertion that $|\mu^{-1}(\{z\})| = |H \cap K|$. 
  Now, for $X\subseteq G$, we have
  \[
    \mu^{-1}(X) 
    = 
    \bigsqcup_{z\in HK \cap X} \mu^{-1}(\{z\})
  \]
  and thus summing over \(z\in HK\cap X\) proves the second assertion.
\end{proof}

We will also need some linear algebraic facts about the group algebra~$\CC[G]$.
For a subgroup~$H \leq G$, define the \emph{averaging operator}
\begin{equation*}
  \Phi_H = \frac{1}{\abs{H}} \sum_{h \in H} h,
\end{equation*}
viewed as a linear map~$\CC[G] \to \CC[G]$ (acting by left multiplication).
We endow~$\CC[G]$ with the inner product
\[
  \cgip{g}{h} = \begin{cases}
    1 & \text{if } g = h, \\
    0 & \text{otherwise.}
  \end{cases}
\]
\begin{lemma}
  \label{lem:averaging operator projector}
  Let~$H \leq G$ be a subgroup.
  Then~$\Phi_H$ is a self-adjoint operator on~$\CC[G]$ and is the orthogonal projection onto~$\CC[G]^H$, with~$\Tr[\Phi_H] = \abs{G} / \abs{H}$.
\end{lemma}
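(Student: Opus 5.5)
We verify the three claims in turn: that $\Phi_H$ is an idempotent whose image is $\CC[G]^H$, that it is self-adjoint (so the projection is orthogonal), and the trace formula. Here $\CC[G]^H$ denotes the elements fixed by left multiplication by every $h \in H$, i.e.\ $\{v \in \CC[G] : hv = v \ \forall h \in H\}$.

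\textbf{Image and idempotence.} For $h' \in H$, left multiplication by $h'$ permutes $H$. Hence $h'\Phi_H = \Phi_H$, so $h'\Phi_H v = \Phi_H v$ for every $v$, and the image of $\Phi_H$ lies in $\CC[G]^H$. Conversely, if $v \in \CC[G]^H$, then $\Phi_H v = \frac{1}{\abs{H}}\sum_{h\in H} hv = v$. So $\Phi_H$ restricts to the identity on $\CC[G]^H$. Together these give $\Phi_H^2 = \Phi_H$ with image exactly $\CC[G]^H$.

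\textbf{Self-adjointness.} Left multiplication by a group element $g$ permutes the basis $G$, so it is unitary for the given inner product. Its adjoint is therefore left multiplication by $g^{-1}$, which can also be checked directly: $\cgip{gx}{y} = \cgip{x}{g^{-1}y}$ for basis elements $x,y$. Thus
\[
\Phi_H^* = \frac{1}{\abs{H}}\sum_{h \in H} h^{-1} = \Phi_H,
\]
since inversion is a bijection of $H$. A self-adjoint idempotent is the orthogonal projection onto its image.

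\textbf{Trace.} Compute in the basis $G$. Left multiplication by $h$ fixes no basis element unless $h = e$, so $\Tr[h] = \abs{G}$ if $h=e$ and $0$ otherwise. Hence $\Tr[\Phi_H] = \abs{G}/\abs{H}$.

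As a cross-check, the trace equals $\dim \CC[G]^H$. The space $\CC[G]^H$ is spanned by the indicator sums of the right cosets $Hg$, since $v$ is invariant exactly when its coefficients are constant on each right coset. There are $\abs{G}/\abs{H}$ such cosets, which agrees.
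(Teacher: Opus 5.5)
Your proof is correct and takes essentially the same approach as the paper: self-adjointness from inversion being a bijection of $H$, the image lying in $\CC[G]^H$ with $\Phi_H$ acting as the identity there, and the trace from the fact that $hx = x$ forces $h = e$. The only difference is cosmetic: you get $\Phi_H^2 = \Phi_H$ from ``image in $\CC[G]^H$ plus identity on $\CC[G]^H$'', whereas the paper counts the factorizations $h = h'h''$. Your right-coset dimension count is a valid extra check that the paper does not include.
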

\begin{proof}
  For every~$g, g' \in G$, we have
  \[
    \cgip{g}{\Phi_H g'} 
    = \frac{1}{\abs{H}} \sum_{h \in H} \cgip{g}{h g'}
    = \frac{1}{\abs{H}} \sum_{h \in H} \cgip{h^{-1} g}{g'}
    = \cgip{\Phi_H g}{g'}
  \]
  as~$H$ is closed under inversion.
  Therefore~$\Phi_H$ is self-adjoint.
  Similarly~$\Phi_H^2 = \Phi_H$, as every~$h \in H$ factorizes as~$h = h' h''$ for exactly~$\abs{H}$ many pairs~$(h', h'') \in H \times H$.
  Moreover, $h \Phi_H v = \Phi_H v$ for every~$v \in \CC[G]$ and so the image is contained in~$\CC[G]^H$; furthermore every~$v \in \CC[G]^H$ clearly satisfies~$\Phi_H v = v$.
  To compute the trace of~$\Phi_H$, note that~$\Tr[\Phi_H] = \sum_{g \in G} \cgip{g}{\Phi_H g} = \sum_{g \in G} 1/\abs{H} = \abs{G}/\abs{H}$, where the second step follows from the fact that~$g = h g$ if and only if~$h$ is the identity element of~$G$.
\end{proof}
\begin{lemma}
  \label{lem:averaging operator monotone}
  Let~$H \leq K \leq G$ be subgroups.
  Then~$\Phi_H \succeq \Phi_K$.
\end{lemma}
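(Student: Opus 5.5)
We show that $\Phi_H - \Phi_K$ is itself an orthogonal projection; since orthogonal projections are positive semidefinite, this gives $\Phi_H \succeq \Phi_K$. The key identity we aim for is
\[
  \Phi_H \Phi_K = \Phi_K \Phi_H = \Phi_K .
\]

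For $\Phi_H \Phi_K = \Phi_K$, we use that $H \leq K$. For each $h \in H$ and $k \in K$ we have $hk \in K$, and left multiplication by $h$ permutes $K$, so $h \Phi_K = \Phi_K$ for every $h \in H$. Averaging over $h \in H$ gives $\Phi_H \Phi_K = \Phi_K$. Equivalently, in the language of \cref{lem:averaging operator projector}, the image $\CC[G]^K$ of $\Phi_K$ lies in $\CC[G]^H$, and $\Phi_H$ fixes $\CC[G]^H$ pointwise. The symmetric identity $\Phi_K \Phi_H = \Phi_K$ follows either by the same argument with right multiplication ($k h$ ranges over $K$ as $k$ does), or by taking adjoints of the first identity, since both operators are self-adjoint by \cref{lem:averaging operator projector}.

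With this identity in hand, set $P = \Phi_H - \Phi_K$. It is self-adjoint as a difference of self-adjoint operators. Expanding and using $\Phi_H^2 = \Phi_H$, $\Phi_K^2 = \Phi_K$ (from \cref{lem:averaging operator projector}) together with the identity above,
\[
  P^2 = \Phi_H - \Phi_H\Phi_K - \Phi_K\Phi_H + \Phi_K = \Phi_H - \Phi_K = P .
\]
Thus $P$ is a self-adjoint idempotent, and for every $v$,
\[
  \cgip{v}{Pv} = \cgip{v}{P^2 v} = \cgip{Pv}{Pv} \geq 0 ,
\]
so $P \succeq 0$, which is $\Phi_H \succeq \Phi_K$.

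None of the steps is a real obstacle. The only point needing care is verifying that the two operators commute, and this is handled by the adjoint argument above.
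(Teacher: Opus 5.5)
Your proof is correct and follows the paper's argument: the paper also relies on both operators being orthogonal projections with $\CC[G]^K \subseteq \CC[G]^H$. The only difference is that the paper cites the standard fact that nested projections are ordered, whereas you prove it by showing $\Phi_H - \Phi_K$ is a self-adjoint idempotent.
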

\begin{proof}
  As both operators are projectors (\cref{lem:averaging operator projector}), it suffices to observe that~$\CC[G]^K \subseteq \CC[G]^H$, which is trivial: every vector in~$\CC[G]$ which is invariant under left-multiplication by~$K$ is also invariant under left-multiplication by~$H$.
\end{proof}
\begin{lemma}
  \label{lem:trace computation}
  Let~$H,K,L,M \leq G$ be subgroups.
  Then
  \[
    \frac{1}{\abs{G}} \Tr[\Phi_K \Phi_H \Phi_L \Phi_M] = \frac{\abs{H K \cap L M}}{\abs{H K} \abs{L M}}.
  \]
\end{lemma}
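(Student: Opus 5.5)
We compute the trace directly in the basis $\{g\}_{g\in G}$ of $\CC[G]$ by expanding the product of the four averaging operators. Since each $\Phi$ acts by left multiplication, the product is
\[
  \Phi_K \Phi_H \Phi_L \Phi_M = \frac{1}{\abs{K}\abs{H}\abs{L}\abs{M}} \sum_{k\in K,\,h\in H,\,l\in L,\,m\in M} khlm .
\]
Left multiplication by $x\in G$ has trace $\abs{G}$ if $x=e$ and $0$ otherwise, because $xg=g$ forces $x=e$, exactly as in the proof of \cref{lem:averaging operator projector}. Hence
\[
  \frac{1}{\abs{G}}\Tr[\Phi_K\Phi_H\Phi_L\Phi_M] = \frac{N}{\abs{K}\abs{H}\abs{L}\abs{M}},
  \qquad
  N = \#\{(k,h,l,m) : khlm = e\}.
\]

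Next we count $N$. The condition $khlm=e$ is equivalent to $hl m = k^{-1}$, or after rearranging, $h\,l = k^{-1} m^{-1}$. To match the right-hand side of the lemma, which involves $HK$ and $LM$, the cleanest route is cyclicity of trace: $\Tr[\Phi_K\Phi_H\Phi_L\Phi_M]=\Tr[\Phi_H\Phi_L\Phi_M\Phi_K]$. We then count quadruples with $hlmk=e$, that is, $(lm)=(h^{-1})\ldots$. Instead, we use self-adjointness (\cref{lem:averaging operator projector}): the trace equals $\langle \Phi_H\Phi_K,\ \Phi_L\Phi_M\rangle$ in the Hilbert–Schmidt sense, which suggests counting pairs
\[
  \bigl((h,k),(l,m)\bigr)\in (H\times K)\times(L\times M) \quad\text{with}\quad hk = lm .
\]
Concretely, using cyclicity and $\Phi^*=\Phi$, the trace condition becomes $k^{-1}h^{-1}lm=e$ after reindexing $k\mapsto k^{-1}$, $h\mapsto h^{-1}$ (bijections of $K$ and $H$). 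So $N$ counts solutions of $hk=lm$ with $h\in H$, $k\in K$, $l\in L$, $m\in M$. I expect getting this order bookkeeping right to be the only real obstacle. If it comes out as $kh=lm$ instead, one inverts both sides and uses $(KH)^{-1}=HK$ and $\abs{KH}=\abs{HK}$ to fix it.

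Finally, we apply \cref{lemma:basic_counting_strong} twice. For each $z\in HK\cap LM$, the number of pairs $(h,k)$ with $hk=z$ is $\abs{H\cap K}$, and the number of pairs $(l,m)$ with $lm=z$ is $\abs{L\cap M}$. Thus
\[
  N = \abs{H\cap K}\,\abs{L\cap M}\,\abs{HK\cap LM}.
\]
The same lemma with $X=G$ gives $\abs{H}\abs{K}=\abs{H\cap K}\abs{HK}$, and similarly $\abs{L}\abs{M}=\abs{L\cap M}\abs{LM}$. Dividing $N$ by $\abs{K}\abs{H}\abs{L}\abs{M}$ then yields $\abs{HK\cap LM}/(\abs{HK}\abs{LM})$, as claimed.
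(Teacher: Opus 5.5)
Your proof is correct and is essentially the paper's argument: expand the product, count solutions of $khlm=e$, rewrite them as solutions of $hk=lm$ via the reindexing $h\mapsto h^{-1}$, $k\mapsto k^{-1}$, and then apply \cref{lemma:basic_counting_strong} fiberwise together with $\abs{H}\abs{K}=\abs{H\cap K}\abs{HK}$. Two remarks on the write-up. The detour through cyclicity, self-adjointness and the Hilbert--Schmidt pairing is unnecessary, since the reindexing alone suffices. Your fallback remark should be dropped: inverting $kh=lm$ gives $h^{-1}k^{-1}=m^{-1}l^{-1}$, which yields $\abs{HK\cap ML}$ rather than $\abs{HK\cap LM}$, and these differ in general. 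Fortunately your main line never needs the fallback.
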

Note that the ordering of the subgroups~$H,K$ differs on the left- and right-hand sides.
\begin{proof}
  Observe that
  \[
    \Tr[\Phi_K \Phi_H \Phi_L \Phi_M] = \frac{1}{\abs{K}\abs{H}\abs{L}\abs{M}} \sum_g \sum_k \sum_h \sum_l \sum_m \cgip{g}{k h l m g}.
  \]
  Each term is~$1$ iff~$k h l m = e$ iff~$l m = h^{-1} k^{-1}$, and zero otherwise.
  We count the number of these tuples as follows.
  Fix~$z \in L M$.
  Then by \cref{lemma:basic_counting_strong} there are~$\abs{L \cap M}$ pairs $(l,m) \in L \times M$ such that~$l m = z$.
  Simultaneously, if~$z \in H K$, there are~$\abs{H \cap K}$ pairs~$(h,k)$ such that~$h^{-1} k^{-1} = z$, and~$0$ such pairs if~$z \not\in H K$.
  Therefore for each~$z \in H K \cap L M$, we have~$\abs{L \cap M} \abs{H \cap K}$ many tuples~$(k,h,l,m)$ such that~$h^{-1} k^{-1} = z = l m$.
  The result now follows from using~$\abs{H K} = \abs{H} \abs{K} / \abs{H \cap K}$ and similarly for~$L, M$.
\end{proof}
Combining this with the operator monotonicity from~\cref{lem:averaging operator monotone}, and the standard inequality~$\Tr[P Q] \geq 0$ for positive semidefinite~$P,Q$, leads to the following two corollaries:
\begin{corollary}
  \label{cor:pivot monotonicity}
  Let~$H,K,L,M \leq G$ be subgroups such that~$L \leq M$.
  Then
  \[
    \frac{\abs{H K \cap K L}}{\abs{H K} \abs{K L}} \geq \frac{\abs{H K \cap K M}}{\abs{H K} \abs{K M}}.
  \]
\end{corollary}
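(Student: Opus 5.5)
We read both sides of the claimed inequality as traces via \cref{lem:trace computation} and then compare the traces using \cref{lem:averaging operator monotone}. Apply \cref{lem:trace computation} with the quadruple $(H,K,L,M)$ replaced by $(H,K,K,L)$; the lemma places $\Phi_K$ first and $\Phi_H$ second. This gives
\[
  \frac{1}{\abs{G}} \Tr[\Phi_K \Phi_H \Phi_K \Phi_L] = \frac{\abs{HK \cap KL}}{\abs{HK}\abs{KL}},
\]
and in the same way
\[
  \frac{1}{\abs{G}} \Tr[\Phi_K \Phi_H \Phi_K \Phi_M] = \frac{\abs{HK \cap KM}}{\abs{HK}\abs{KM}}.
\]
The claim therefore reduces to showing
\[
  \Tr[\Phi_K \Phi_H \Phi_K \Phi_L] \geq \Tr[\Phi_K \Phi_H \Phi_K \Phi_M].
\]

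Set $P = \Phi_K \Phi_H \Phi_K$. By \cref{lem:averaging operator projector}, $\Phi_K$ and $\Phi_H$ are self-adjoint projections, so $P = \Phi_K \Phi_H \Phi_H \Phi_K = (\Phi_H\Phi_K)^* (\Phi_H \Phi_K)$. Hence $P$ is positive semidefinite. Since $L \leq M$, \cref{lem:averaging operator monotone} gives $Q := \Phi_L - \Phi_M \succeq 0$. The standard fact $\Tr[PQ] \geq 0$ for positive semidefinite $P,Q$ then yields $\Tr[P\Phi_L] - \Tr[P\Phi_M] \geq 0$, which is exactly the reduced inequality. Dividing by $\abs{G}$ and substituting the two trace identities proves the corollary.

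The only point needing care is the bookkeeping of the subgroup ordering in \cref{lem:trace computation}. Because of the remark following that lemma, the operator product must be $\Phi_K\Phi_H\Phi_K\Phi_L$, with $K$ first, and not $\Phi_H\Phi_K\cdots$. We must check that this ordering produces the set $HK$ on the right-hand side and the set $KL$ in the second slot. With this ordering the sandwiched operator is the positive semidefinite $\Phi_K\Phi_H\Phi_K$. If the ordering were wrong, the sandwiched operator would no longer be manifestly positive semidefinite and the argument would break.
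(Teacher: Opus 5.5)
Your proof is correct and is the paper's argument. The paper also identifies both sides with $\frac{1}{\abs{G}}\Tr[\Phi_K\Phi_H\Phi_K\Phi_L]$ and $\frac{1}{\abs{G}}\Tr[\Phi_K\Phi_H\Phi_K\Phi_M]$ via \cref{lem:trace computation}, then compares them using $\Phi_M \preceq \Phi_L$ from \cref{lem:averaging operator monotone} together with $\Tr[PQ]\geq 0$; your factorization $\Phi_K\Phi_H\Phi_K = (\Phi_H\Phi_K)^*(\Phi_H\Phi_K)$ spells out the positive semidefiniteness the paper leaves implicit.
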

\begin{proof}
  By~\cref{lem:averaging operator monotone} we have~$\Phi_M \preceq \Phi_L$ and hence~$\Tr[\Phi_K \Phi_H \Phi_K \Phi_L] \geq \Tr[\Phi_K \Phi_H \Phi_K \Phi_M]$.
\end{proof}
\begin{corollary}
  \label{cor:cauchy schwarz lower bound}
  Let~$H,K,L \leq G$ be subgroups such that~$H,K \leq L$.
  Then
  \[
    \frac{\abs{H K \cap K H}}{\abs{H K} \abs{K H}} \geq \frac{1}{\abs{L}}.
  \]
\end{corollary}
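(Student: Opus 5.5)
The plan is to express the left-hand side as a trace via \cref{lem:trace computation}, bound that trace below using operator monotonicity (\cref{lem:averaging operator monotone}), and then use the absorption properties of the averaging operators to reduce it to $\Tr[\Phi_L]$.

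\emph{Trace form.} Apply \cref{lem:trace computation} with the four subgroups $(H,K,L,M)$ replaced by $(H,K,K,H)$. This gives
\[
  \frac{\abs{HK\cap KH}}{\abs{HK}\abs{KH}} = \frac{1}{\abs{G}}\Tr[\Phi_K\Phi_H\Phi_K\Phi_H].
\]
It therefore suffices to show $\Tr[\Phi_K\Phi_H\Phi_K\Phi_H]\geq \abs{G}/\abs{L}$.

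\emph{Lower bound.} Write the trace as $\Tr[\Phi_K\cdot P]$ with $P=\Phi_H\Phi_K\Phi_H$. Since $\Phi_H$ is self-adjoint and $\Phi_K\succeq 0$ by \cref{lem:averaging operator projector}, $P$ is positive semidefinite. Since $K\leq L$, \cref{lem:averaging operator monotone} gives $\Phi_K\succeq\Phi_L$. Because $\Tr[QP]\geq 0$ for positive semidefinite $Q,P$, we get
\[
  \Tr[\Phi_K P]\geq\Tr[\Phi_L P]=\Tr[\Phi_L\Phi_H\Phi_K\Phi_H].
\]

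\emph{Absorption.} Next, for $H\leq L$ we have $\Phi_L\Phi_H=\Phi_H\Phi_L=\Phi_L$:
\begin{itemize}[label={}, leftmargin=1em]
  \item $\Phi_L h=\Phi_L$ for every $h\in L$, since left multiplication by $h$ permutes $L$; averaging over $h\in H$ gives $\Phi_L\Phi_H=\Phi_L$.
  \item The image of $\Phi_L$ lies in $\CC[G]^L\subseteq\CC[G]^H$, and $\Phi_H$ fixes $\CC[G]^H$, so $\Phi_H\Phi_L=\Phi_L$.
\end{itemize}
The same holds with $K$ in place of $H$, since $K\leq L$.

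\emph{Conclusion.} Using cyclicity of the trace and absorption twice,
\[
  \Tr[\Phi_L\Phi_H\Phi_K\Phi_H]=\Tr[\Phi_H\Phi_L\Phi_H\Phi_K]=\Tr[\Phi_L\Phi_K]=\Tr[\Phi_L].
\]
By \cref{lem:averaging operator projector}, $\Tr[\Phi_L]=\abs{G}/\abs{L}$. Dividing by $\abs{G}$ yields the claim.

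The one step needing care is the monotonicity step. The bound $\Phi_K\succeq\Phi_L$ may only be applied against a positive semidefinite factor, which is why I first group the trace as $\Tr[\Phi_K(\Phi_H\Phi_K\Phi_H)]$ rather than using the ungrouped product. If this were to fail, a fallback is a Cauchy--Schwarz argument, as the corollary's name suggests: write $\Tr[\Phi_K\Phi_H\Phi_K\Phi_H]=\lVert\Phi_K\Phi_H\Phi_K\rVert_{2}^2$-type expressions and compare them with $\abs{\Tr[\Phi_L\Phi_K\Phi_H]}^2/\Tr[\Phi_L]$.
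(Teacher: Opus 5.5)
Your proof is correct and is essentially the paper's argument: both write the left-hand side as $\frac{1}{\abs{G}}\Tr[\Phi_K\Phi_H\Phi_K\Phi_H]$ via \cref{lem:trace computation} and then apply L\"owner monotonicity against a positive semidefinite sandwich such as $\Phi_H\Phi_K\Phi_H$. The paper applies monotonicity four times, replacing every $\Phi_H,\Phi_K$ by $\Phi_L$ and ending at $\Tr[\Phi_L^4]=\Tr[\Phi_L]$, whereas you apply it once and then collapse exactly using $\Phi_L\Phi_H=\Phi_H\Phi_L=\Phi_L$ (and likewise for $K$); this incidentally shows that only the first inequality in the paper's chain can be strict.
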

\begin{proof}
  The operators~$\Phi_K \Phi_H \Phi_K$ and~$\Phi_L \Phi_K \Phi_L$ are positive semidefinite, hence~\cref{lem:averaging operator monotone} yields
  \begin{align*}
    \abs{G} \frac{\abs{H K \cap K H}}{\abs{H K} \abs{K H}} & = \Tr[\Phi_K \Phi_H \Phi_K \Phi_H] 
    \geq \Tr[\Phi_K \Phi_H \Phi_K \Phi_L]
    \geq \Tr[\Phi_K \Phi_L \Phi_K \Phi_L]
  \end{align*}
  and similarly~$\Tr[\Phi_K \Phi_L \Phi_K \Phi_L] \geq \Tr[\Phi_L \Phi_L \Phi_K \Phi_L] \geq \Tr[\Phi_L \Phi_L \Phi_L \Phi_L] = \Tr[\Phi_L] = \gc{} / \abs{L}$.
\end{proof}

\section{The proof of the Zhang--Yeung inequality}
We start with a simple lemma where we count the elements of a single set in two different ways.
\begin{lemma}
  \label{lem:E two countings}
  Consider the set 
  \[
    E = \{ (a,b,c,d) \in \g{13} \times \g{14} \times \g{23} \times \g{24} : bc \in \g3 \g4 \}
  \]
  The cardinality of~$E$ satisfies
  \[
    \abs{E} = \gc{13} \gc{1234} \abs{\g{14} \g{23} \cap \g3 \g4} \gc{24}  \leq \abs{\g1 \g2 \cap \g3 \g4} \gc{12} \gc{134} \gc{234}.
  \]
\end{lemma}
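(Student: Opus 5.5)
The plan is to count $E$ in two ways, using \cref{lemma:basic_counting_strong} for the equality and an explicit map into $\g1\g2 \cap \g3\g4$ with bounded fibers for the inequality.

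\emph{The equality.} The coordinates $a \in \g{13}$ and $d \in \g{24}$ do not appear in the constraint $bc \in \g3\g4$, so they contribute a factor $\gc{13}\gc{24}$. It remains to count pairs $(b,c) \in \g{14}\times\g{23}$ with $bc \in \g3\g4$. This is $\abs{\mu^{-1}(X)}$ for the multiplication map $\mu\colon \g{14}\times\g{23}\to G$ and $X=\g3\g4$. By \cref{lemma:basic_counting_strong}, and since $\g{14}\cap\g{23}=\g{1234}$, it equals $\gc{1234}\,\abs{\g{14}\g{23}\cap\g3\g4}$. This gives the stated formula for $\abs{E}$.

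\emph{The inequality.} Consider the map $\pi\colon E\to G\times G$, $(a,b,c,d)\mapsto (ab,\,cd)$.
\begin{itemize}
\item Since $a,b\in\g1$ we have $ab\in\g1$, and since $c,d\in\g2$ we have $cd\in\g2$.
\item The product satisfies $ab\,cd \in \g1\g2$. Also $a\in\g3$, $bc\in\g3\g4$ and $d\in\g4$, so $abcd = a(bc)d \in \g3\g3\g4\g4=\g3\g4$.
\item Hence the image of $\pi$ lies in $\{(x,y)\in\g1\times\g2 : xy\in \g1\g2\cap\g3\g4\}$.
\end{itemize}
By \cref{lemma:basic_counting_strong} applied to $\g1\times\g2$ with $X=\g1\g2\cap\g3\g4$, this set has exactly $\gc{12}\,\abs{\g1\g2\cap\g3\g4}$ elements.

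Next we bound the fibers of $\pi$. Fix $(x,y)$. The pairs $(a,b)\in\g{13}\times\g{14}$ with $ab=x$ form a fiber of multiplication $\g{13}\times\g{14}\to G$. By \cref{lemma:basic_counting_strong} there are at most $\abs{\g{13}\cap\g{14}}=\gc{134}$ of them (either $0$ or exactly that many). Similarly, there are at most $\gc{234}$ pairs $(c,d)\in\g{23}\times\g{24}$ with $cd=y$.

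So every fiber of $\pi$ has size at most $\gc{134}\gc{234}$. Multiplying by the size of the image set gives $\abs{E}\le \abs{\g1\g2\cap\g3\g4}\,\gc{12}\gc{134}\gc{234}$. The only inequality in the argument comes from overcounting: not every $(x,y)$ in the target set need be hit, and not every factorization of $x$ and $y$ need satisfy the constraint $bc\in\g3\g4$.

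The main obstacle is choosing the grouping $(ab,cd)$. The two facts $abcd\in\g3\g4$ and $ab\in\g1$, $cd\in\g2$ must hold simultaneously, and this only works with $a$ and $d$ on the outside. Once that grouping is fixed, the rest is routine fiber counting.
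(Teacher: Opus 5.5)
Your proposal is correct and follows essentially the same route as the paper: the equality comes from \cref{lemma:basic_counting_strong} applied to the multiplication map $\g{14}\times\g{23}\to G$, with $a,d$ free, and the upper bound comes from grouping $(a,b,c,d)$ by $(ab,cd)\in\g1\times\g2$, whose product lies in $\g1\g2\cap\g3\g4$, and bounding the fibers by $\gc{134}\gc{234}$. The only cosmetic difference is that you count the admissible pairs $(x,y)$ exactly via \cref{lemma:basic_counting_strong}, whereas the paper decomposes by $z=abcd$ and bounds the number of factorizations $z=pq$ by $\gc{12}$.
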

\begin{proof}
  The exact cardinality of~$E$ is determined by applying~\cref{lemma:basic_counting_strong} to the multiplication map~$G_{14} \times G_{23} \to G$, concluding that the size of the preimage of~$\g3 \g4$ is~$\gc{1234} \abs{\g{14} \g{23} \cap \g3 \g4}$.
  To prove the upper bound, we proceed as follows.
  Observe that for any~$(a,b,c,d) \in E$, we have~$ab \in G_1$, $cd \in G_2$, and~$abcd = a(bc)d \in G_3 G_4$, as well as~$abcd \in G_1 G_2$.
  Therefore,
  \begin{align*}
    E & = \coprod_{z \in \g1 \g2 \cap \g3 \g4} \{ (a,b,c,d) \in E : abcd = z \} \\
      & = \coprod_{z \in \g1 \g2 \cap \g3 \g4} \coprod_{(p,q) \in \g1 \times \g2}  \{ (a,b,c,d) \in E : pq = z, ab = p, cd = q \}.
  \end{align*}
  For fixed~$z \in G$, there are at most~$\gc{12}$ pairs~$(p,q)$ in~$G_1 \times G_2$ such that~$pq = z$, and for each fixed~$p$ and~$q$ respectively, there are at most~$\gc{134}$ and~$\gc{234}$ pairs~$(a,b) \in \g{13} \times \g{14}$ and~$(c,d) \in \g{23} \times \g{24}$ such that~$ab = p$ and~$cd = q$.
  Therefore~$\abs{E} \leq \abs{\g1 \g2 \cap \g3 \g4} \gc{12} \gc{134} \gc{234}$.
\end{proof}
This already has much of the structure of~\cref{thm:zcy inequality}, as one can upper bound~$\abs{\g1 \g2 \cap \g3 \g4} \leq \abs{\g3 \g4} = \gc3 \gc4 / \gc{34}$.
It remains to prove good lower bounds on~$\abs{\g{14} \g{23} \cap \g3 \g4}$.
Note that this quantity is not directly related to the cardinality of a subgroup of~$G$, so it is not clear how to interpret this in terms of e.g., the conditional mutual information between some random variables arising from the Chan--Yeung construction.

We first change the number of occurrences of the subgroup~$G_2$ through the following lemma:
\begin{lemma}
  \label{lem:first lower bound}
  We have the inequality
  \[
    \frac{\abs{\g{14} \g{23} \cap \g3 \g4} \gc{1234}}{\gc{14} \gc{23}}
    =
    \frac{\abs{\g{14} \g{23} \cap \g3 \g4}}{\abs{\g{14} \g{23}}}
    \geq
    \frac{\abs{\g{14} \g3 \cap \g3 \g4}}{\abs{\g3 \g4}}.
  \]
\end{lemma}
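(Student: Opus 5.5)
The equality on the left follows from \cref{lemma:basic_counting_strong}. The multiplication map $\g{14}\times\g{23}\to G$ has all fibers of size $\abs{\g{14}\cap\g{23}}=\gc{1234}$, so $\abs{\g{14}\g{23}}=\gc{14}\gc{23}/\gc{1234}$. For the inequality I will turn both sides into traces of products of averaging operators using \cref{lem:trace computation}, and then compare them with the operator inequalities of \cref{lem:averaging operator monotone}, in the same way as \cref{cor:pivot monotonicity}.

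Concretely, \cref{lem:trace computation} with $(H,K,L,M)=(\g{14},\g{23},\g3,\g4)$ gives
\[
  \frac{\abs{\g{14}\g{23}\cap\g3\g4}}{\abs{\g{14}\g{23}}}=\frac{\abs{\g3\g4}}{\gc{}}\Tr[\Phi_{\g{23}}\Phi_{\g{14}}\Phi_{\g3}\Phi_{\g4}].
\]
With $(H,K,L,M)=(\g{14},\g3,\g3,\g4)$ it gives
\[
  \frac{\abs{\g{14}\g3\cap\g3\g4}}{\abs{\g3\g4}}=\frac{\abs{\g{14}\g3}}{\gc{}}\Tr[\Phi_{\g3}\Phi_{\g{14}}\Phi_{\g3}\Phi_{\g4}].
\]
Since $\g{14}\le\g4$, we have $\g{14}\g3\subseteq\g4\g3$. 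Also $\abs{\g4\g3}=\abs{\g3\g4}$, because inversion maps one set onto the other. Hence $\abs{\g{14}\g3}\le\abs{\g3\g4}$. The trace on the right is nonnegative, since it equals a cardinality ratio. So it suffices to prove
\[
  \Tr[\Phi_{\g{23}}\Phi_{\g{14}}\Phi_{\g3}\Phi_{\g4}]\ \ge\ \Tr[\Phi_{\g3}\Phi_{\g{14}}\Phi_{\g3}\Phi_{\g4}].
\]
This is the statement that replacing $\g3$ by its subgroup $\g{23}$ in the first slot increases the trace. It is exactly the kind of comparison $\Phi_{\g{23}}\succeq\Phi_{\g3}$ delivers, via \cref{lem:averaging operator monotone} and $\Tr[PQ]\ge0$ for positive semidefinite $P,Q$.

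The main obstacle is to rewrite the trace as $\Tr[\Phi_{\g{23}}P]$ versus $\Tr[\Phi_{\g3}P]$ with $P$ positive semidefinite. The operator $\Phi_{\g{14}}\Phi_{\g3}\Phi_{\g4}$ is not obviously positive. I would use the absorption identities: $\Phi_K\Phi_H=\Phi_H\Phi_K=\Phi_K$ whenever $H\le K$, since averaging over the larger group absorbs the smaller one. These give $\Phi_{\g3}=\Phi_{\g3}\Phi_{\g{23}}=\Phi_{\g{23}}\Phi_{\g3}$ and $\Phi_{\g4}=\Phi_{\g4}\Phi_{\g{14}}=\Phi_{\g{14}}\Phi_{\g4}$. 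Together with cyclicity of the trace, the aim is to bring both sides into a sandwiched form $\Tr[\Phi_K\Phi_H\Phi_K\Phi_L]$, as in \cref{cor:pivot monotonicity}, where the operator next to the slot being varied is a positive semidefinite product like $\Phi_K\Phi_H\Phi_K$.

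If this purely operator-theoretic rewriting does not close, I would fall back on a direct counting argument. For each fixed $x\in\g{14}$, I would compare the fraction of $y\in\g{23}$ with $xy\in\g3\g4$ to the fraction of $y\in\g3$ with the same property. The set $\{y : xy\in\g3\g4\}=x^{-1}\g3\g4$ is a union of left $\g4$-cosets, so its intersections with $\g3$ and with $\g{23}$ are unions of cosets of $\g{34}$ and of $\g{234}$ respectively. One then needs an averaging (Cauchy--Schwarz-type) argument over $x$ to show that the subgroup $\g{23}$ meets these sets at least proportionally. This is the step I expect to require the most care.
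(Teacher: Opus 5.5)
Your equality and both trace identities are correct, but the reduction that follows replaces the lemma by a false statement. By those same identities, your target $\Tr[\Phi_{G_{23}}\Phi_{G_{14}}\Phi_{G_3}\Phi_{G_4}]\ge\Tr[\Phi_{G_3}\Phi_{G_{14}}\Phi_{G_3}\Phi_{G_4}]$ is equivalent to
\[
  \frac{\abs{G_{14}G_{23}\cap G_3G_4}}{\abs{G_{14}G_{23}}}\ \ge\ \frac{\abs{G_{14}G_3\cap G_3G_4}}{\abs{G_{14}G_3}}.
\]
This is the lemma with the smaller denominator $\abs{G_{14}G_3}$ in place of $\abs{G_3G_4}$, and it fails. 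Take $G=S_4$ (permutations of $\{1,2,3,4\}$ composed as functions), $G_1=\langle(13)\rangle$, $G_2=\langle(34)\rangle$, $G_3=\mathrm{Sym}\{2,3,4\}$, and $G_4$ the stabilizer of $4$. Then $G_{14}=G_1$, $G_{23}=G_2$, and $G_3G_4=\{g : g(4)\neq 1\}$. Of the four elements of $G_{14}G_{23}$, only $(13)(34)$ sends $4$ to $1$, so the left side is $3/4$. Of the twelve elements of $G_{14}G_3$, exactly $(13)(34)$ and $(13)(243)$ send $4$ to $1$, so the right side is $10/12=5/6$. In trace form this reads $1<10/9$.

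So no rearrangement via absorption identities and cyclicity can reach your target. Your fallback (averaging over $x\in G_{14}$ the fraction of $y\in G_{23}$ versus $y\in G_3$ with $xy\in G_3G_4$) aims at exactly the same false inequality. The obstruction is the one you sensed: $\Phi_{G_{14}}\Phi_{G_3}\Phi_{G_4}$ is not positive semidefinite. In $\Tr[\Phi_{G_3}\Phi_{G_{14}}\Phi_{G_3}\Phi_{G_4}]$, the slot you vary is an outer factor of the sandwich $\Phi_{G_3}\Phi_{G_{14}}\Phi_{G_3}$, and operator monotonicity does not control that.

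The missing idea is to first shrink $G_3G_4$ to $G_3G_{14}$, which is allowed since $G_{14}\le G_4$. Then the operator multiplying the slot you want to vary becomes the positive semidefinite sandwich $\Phi_{G_{14}}\Phi_{G_3}\Phi_{G_{14}}$. You must also carry the normalisation $\abs{G_3G_{14}}$ along rather than bounding it away:
\begin{align*}
  \frac{\abs{G_{14}G_{23}\cap G_3G_4}}{\abs{G_{14}G_{23}}}
  &\ge \frac{\abs{G_3G_{14}}}{\abs{G}}\Tr[\Phi_{G_{23}}\Phi_{G_{14}}\Phi_{G_3}\Phi_{G_{14}}]
  \ge \frac{\abs{G_3G_{14}}}{\abs{G}}\Tr[\Phi_{G_3}\Phi_{G_{14}}\Phi_{G_3}\Phi_{G_{14}}] \\
  &\ge \frac{\abs{G_3G_{14}}}{\abs{G}}\Tr[\Phi_{G_3}\Phi_{G_{14}}\Phi_{G_3}\Phi_{G_4}]
  = \frac{\abs{G_{14}G_3\cap G_3G_4}}{\abs{G_3G_4}}.
\end{align*}
The first step is the inclusion $G_3G_{14}\subseteq G_3G_4$ together with \cref{lem:trace computation}. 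The second uses $\Phi_{G_{23}}\succeq\Phi_{G_3}$ against $\Phi_{G_{14}}\Phi_{G_3}\Phi_{G_{14}}\succeq 0$. The third uses $\Phi_{G_{14}}\succeq\Phi_{G_4}$ against $\Phi_{G_3}\Phi_{G_{14}}\Phi_{G_3}\succeq 0$; these two steps are the paper's two applications of \cref{cor:pivot monotonicity}. The final equality uses $\abs{G_3G_{14}}=\abs{G_{14}G_3}$.

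The factor $\abs{G_{14}G_3}/\abs{G_3G_4}$ that your reduction discarded is exactly the factor incurred by the first step. That is why it cannot be dropped.
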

\begin{proof}
  First, we have the trivial inequality~$\abs{\g{14} \g{23} \cap \g3 \g4} \geq \abs{\g{14} \g{23} \cap \g3 \g{14}}$.
  We complete the proof by two applications of~\cref{cor:pivot monotonicity}:
  \[
    \frac{\abs{\g{14} \g{23} \cap \g3 \g{14}}}{\abs{G_{14} G_{23}}} \geq 
    \frac{\abs{\g{14} \g{3} \cap \g3 \g{14}}}{\abs{G_{14} G_{3}}} \geq 
    \frac{\abs{\g{14} \g{3} \cap \g3 \g{4}}}{\abs{G_{4} G_{3}}}.
    \qedhere
  \]
\end{proof}

\begin{lemma}
  \label{lem:second lower bound}
  We have the inequality
  \[
    \abs{\g{14} \g3 \cap \g3 \g4} \geq \frac{\abs{\g{13} \g{14}}^2 \gc{34}}{\gc1 \gc{134}}
  \]
\end{lemma}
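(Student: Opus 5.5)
The plan is to find an explicit subset of $\g{14}\g3 \cap \g3\g4$ that is large enough. The subset will be built from the "symmetric" set
\[
  T := \g{14}\g{13} \cap \g{13}\g{14} \subseteq \g1,
\]
whose size is controlled by \cref{cor:cauchy schwarz lower bound}. We then enlarge it by right multiplication with $\g{34}$.

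\emph{Step 1 (lower bound on $T$).} Apply \cref{cor:cauchy schwarz lower bound} with $H = \g{14}$, $K = \g{13}$ and $L = \g1$. This is allowed since both $\g{13}$ and $\g{14}$ are contained in $\g1$. It gives
\[
  \abs{T} \geq \frac{\abs{\g{14}\g{13}}\,\abs{\g{13}\g{14}}}{\gc1} = \frac{\abs{\g{13}\g{14}}^2}{\gc1}.
\]
Here we used $\abs{HK} = \abs{H}\abs{K}/\abs{H\cap K} = \abs{KH}$.

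\emph{Step 2 (containment).} We claim $T\g{34} \subseteq \g{14}\g3 \cap \g3\g4$. Take $x \in T$ and $y \in \g{34}$.
\begin{itemize}
  \item Since $x \in \g{14}\g{13}$ and $y \in \g3$, we get $xy \in \g{14}\g{13}\g3 = \g{14}\g3$.
  \item Since $x \in \g{13}\g{14} \subseteq \g3\g4$ and $y \in \g4$, we get $xy \in \g3\g4\g4 = \g3\g4$.
\end{itemize}

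\emph{Step 3 (counting $T\g{34}$).} The set $T\g{34}$ is a union of left cosets $x\g{34}$ with $x \in T$. Each such coset has exactly $\gc{34}$ elements.

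Two elements $x, x' \in T$ give the same coset exactly when $x^{-1}x' \in \g{34}$. Since $x, x' \in \g1$, this forces $x^{-1}x' \in \g1 \cap \g{34} = \g{134}$. Hence each coset $x\g{34}$ contains at most $\gc{134}$ elements of $T$. Therefore the number of distinct cosets is at least $\abs{T}/\gc{134}$, and
\[
  \abs{T\g{34}} \geq \frac{\abs{T}\,\gc{34}}{\gc{134}}.
\]

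Combining Steps 1–3 gives
\[
  \abs{\g{14}\g3 \cap \g3\g4} \geq \abs{T\g{34}} \geq \frac{\abs{\g{13}\g{14}}^2\,\gc{34}}{\gc1\,\gc{134}}.
\]

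The main obstacle is picking the intermediate set. Intersecting only with $T$ misses the factor $\gc{34}/\gc{134} \geq 1$. That factor is recovered precisely by right-translating $T$ by $\g{34}$: this operation stays inside both product sets, and the resulting cosets overlap only through $\g{134}$ because $T \subseteq \g1$.
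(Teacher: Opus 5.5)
Your proof is correct and follows essentially the same route as the paper. You use the same set $T=\g{14}\g{13}\cap\g{13}\g{14}\subseteq\g1$, lower-bound it via \cref{cor:cauchy schwarz lower bound}, and right-translate it by $\g{34}$ into $\g{14}\g3\cap\g3\g4$. The only difference is cosmetic: you bound $\abs{T\g{34}}$ by counting cosets of $\g{34}$, while the paper applies \cref{lemma:basic_counting_strong} to the multiplication map $\g1\times\g{34}\to G$; both express the same fact, that multiplication on $T\times\g{34}$ has fibers of size at most $\gc{134}$.
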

\begin{proof}
  Consider~$\mathcal{R} \coloneqq \g{14} \g{13} \cap \g{13} \g{14} \subseteq \g1$.
  Then~$\mathcal{R} \g{34} \subseteq \g{14} \g3 \cap \g3 \g4 \eqqcolon \mathcal{A}$.
  Applying~\cref{lemma:basic_counting_strong} to the multiplication map~$\nu_{1,34}\colon \g1 \times \g{34} \to G$ yields
  \[
    \abs{\nu_{1,34}^{-1}(\mathcal{A})} = \abs{\g{134}} \abs{\g1 \g{34} \cap \mathcal{A}} \leq \gc{134} \abs{\mathcal{A}}.
  \]
  The left-hand side is lower bounded by~$\abs{\mathcal{R}} \gc{34}$, and by~\cref{cor:cauchy schwarz lower bound} we have
  \[
    \abs{\mathcal{R}} \geq \frac{\abs{\g{13} \g{14}}^2}{\gc1}.
  \]
  Rearranging the inequalities yields
  \[
    \abs{\mathcal{A}}
    \geq
    \frac{\abs{\g{13} \g{14}}^2 \gc{34}}{\gc1 \gc{134}}.
    \qedhere
  \]
\end{proof}
\begin{proof}[Proof of~\cref{thm:zcy inequality}]
  By applying~\cref{lem:E two countings,lem:first lower bound,lem:second lower bound} in succession, we obtain
  \begin{align*}
    \abs{\g1 \g2 \cap \g3 \g4} \gc{12} \gc{134} \gc{234} & \geq \abs{E} = \gc{13} \gc{1234} \abs{\g{14} \g{23} \cap \g3 \g4} \gc{24} \\
                                                         & \geq \gc{13} \gc{24} \gc{14} \gc{23} \frac{\abs{\g{14} \g3 \cap \g3 \g4}}{\abs{\g3 \g4}} \\
                                                         & \geq \frac{\gc{13} \gc{24} \gc{14} \gc{23} \abs{\g{13} \g{14}}^2 \gc{34}}{\abs{\g3 \g4} \gc1 \gc{134}} \\
                                                         & = \frac{\gc{13}^3 \gc{24} \gc{14}^3 \gc{23} \gc{34}^2}{\gc3 \gc4 \gc1 \gc{134}^3}
  \end{align*}
  which becomes the group-theoretic Zhang--Yeung inequality~(\ref{eq:G-ZY}) after estimating~$\abs{\g1 \g2 \cap \g3 \g4} \leq \abs{\g3 \g4} = \gc3 \gc4 / \gc{34}$.
\end{proof}

\vspace{1em}
\noindent
\textbf{AI Usage Declaration.}
No AI tools were involved in finding the results of the paper.
AI tools were used to assist in literature checking.
All text is written by the authors, and any omissions or inaccuracies are entirely the responsibility of the authors.

\vspace{1em}
\noindent
\textbf{Acknowledgements.}
The authors would like to thank Thomas C. Fraser for bringing this question to their attention, as well as for interesting discussions.
HN is supported by the European Union's Horizon Europe research and innovation programme under the Marie Sk\l{}odowska-Curie Actions (MSCA) Postdoctoral Fellowship, Grant Agreement No. 101212204 (AsympTensorPolytope).
LP acknowledges the Danish e-Infrastructure Consortium (DeiC) 5260-00014B grant, which supported part of this work.
HN and LP also acknowledge support from Villum Fonden via the QMATH Centre of Excellence (Grant No.~10059).

\printbibliography

\end{document}